\theoremstyle{theorem}
\newtheorem*{lemman}{Lemma}
\newtheorem{theom}{Theorem}
\def\Bbb#1{\mathbb{#1}}
\def\EINS{\mathbb{1}}
\begin{document}
\title[Resonance-Decay Problem in Quantum Mechanics]{Remarks to the Resonance-Decay Problem in Quantum Mechanics from a Mathematical Point of View}
\author{Hellmut Baumg\"artel}
\address{Mathematical Institute\\
University of Potsdam\\
Germany}
\email{baumg@uni-potsdam.de}

\begin{abstract}
The description of bumps in scattering cross-sections by Breit-Wigner amplitudes led in the framework of the mathematical Physics to its formulation as the so-called Resonance-Decay Problem. It consists of a spectral theoretical component and the connection of this component with the construction of decaying states. First the note quotes a solution for scattering systems, where the absolutely continuous parts of the Hamiltonians are semi-bounded and the scattering matrix is holomorphic in the upper half plane. This result uses the approach developed by Lax and Phillips, where the energy scale is extended to the whole real axis. The relationship of the spectral theoretic part of its solution and corresponding solutions obtained by other approaches is explained in the case of the Friedrichs model. A No-Go theorem shows the impossibility of the total solution within the specific framework of non-relativistic quantum mechanics. This points to the importance of the Lax-Phillips approach. At last, a solution is presented, where the scattering matrix is meromorphic in the upper half plane.
\end{abstract}

\subjclass{81U20 Scattering Theory}
\keywords{Resonance-Decay problem, Lax-Phillips approach, Friedrichs model}
\maketitle

\section{Introduction}

The origin of the resonance-decay problem in non-relativistic quantum mechanics is the observation of bumps in scattering cross-sections and their successful description by the so-called Breit-Wigner formula
\[
\Bbb{R}\ni\lambda\rightarrow\frac{1}{\pi}\frac{\alpha}{(\lambda-c)^{2}+\alpha^{2}},\quad\alpha>0,\;c\in\Bbb{R},
\]
where $c$ is the resonance energy $E_{0}$. Since
\begin{equation}\label{eq:1}
\frac{1}{\pi}\int_{-\infty}^{\infty}\frac{\alpha}{(\lambda-c)^{2}+\alpha^{2}}d\lambda=1,
\end{equation}
the Breit-Wigner formula describes, in a more general scope, a probability distribution of a real numerical quantity varying over the whole real axis $\Bbb{R}$. However in the framework of scattering theory the Breit-Wigner formula is interpreted by the Breit-Wigner amplitude
\[
e(\lambda):=(\frac{\alpha}{\pi})^{1/2}\frac{1}{\lambda-(c-i\alpha)},\quad c-i\alpha=:\zeta\in\Bbb{C}_{-},
\]
such that eq.~(1) now reads $\int_{-\infty}^{\infty}\vert e(\lambda)\vert^{2}d\lambda=1$. This suggests to consider the function $e$ as a state of a quantum-mechanical quantity, whose states are elements of the Hilbert space $L^{2}(\Bbb{R},d\lambda)$, where the energy is ``diagonalized'', i.e. the energy operator is the multiplication operator $M$ in this space and the time-evolution is given by the unitary operator $e^{-itM}$.

The so-called expectation value for the state $e$ according to the evolution $e^{-itM}$ is given by $(e,e^{-itM}e)$ and the corresponding ``Born probability'' by $\vert(e,e^{-itM}e)\vert^{2}$. The calculation of the expectation value gives
\[
(e,e^{-itM}e)=e^{-\alpha\vert t\vert-ict},
\]
i.e. with $\zeta:=c-i\alpha$ one gets
\begin{equation}\label{eq:2}
(e,e^{-itM}e)=
\begin{cases}
e^{-it\zeta},\quad t>0,\\
e^{-it\overline{\zeta}},\quad t<0,
\end{cases}
\end{equation}
and
\[
\vert(e,e^{-itM}e)\vert^{2}=e^{-2\alpha\vert t\vert},\quad t\in\Bbb{R}.
\]
That is, the expectation value forms an exponentially decaying semigroup $t\rightarrow e^{-it\zeta}$ for $t\geq 0$, similarly for $t\leq 0$. Within this context the Breit-Wigner amplitude - in particular the decay-semigroup property~\eqref{eq:2} - suggests the idea that $e$ could be interpreted as an unstable or decaying state, i.e. as an eigenvector of an exponentially decaying semigroup, where $\zeta$ is an eigenvalue of its generator. This interpretation of the Breit-Wigner amplitude leads to the problem to derive such a semigroup and the spectrum of its generator from properties of the scattering matrix, in particular from their poles in the lower half plane. The reason to focus on the poles is explained in Sec.2. The formulation of these ideas within the framework of the mathematical scattering theory led to the so-called

\paragraph{Resonance-Decay Problem} Let $\{H,H_{0}\}$ be an asymptotically complete quan\-tum-mechanical scattering system with scattering operator $S$. Then one has to construct a non-selfadjoint operator $B$, generator of a so-called decay-semigroup, depending on $H$ via $S$, whose eigenvalue spectrum coincides with the set of all poles of the scattering matrix in the lower half plane, such that the corresponding eigenstates can be interpreted as the hypothetical decaying states, connected with the Breit-Wigner amplitudes.

In other words, the first part of the problem is a {\em spectral theoretical characterization} of the poles of the scattering matrix in the lower half plane and the second step is to connect this characterization with the decay problem. Since the seventies this problem induced a vast series of various developments (for a selection of references see~\cite{ref:1,ref:2}). In the non-relativistic quantum mechanics the focus of interest is directed to scattering systems, whose Hamiltonians are semi-bounded. However the mathematical framework of scattering theory also includes scattering systems, where the generator $H$ of the unitary evolution group $\Bbb{R}\ni t\rightarrow 
e^{-itH}$ on a Hilbert space $\mathcal{H}$ together with the multiplication operator $M=:H_{0}$ on the Hilbert space 
$L^{2}(\Bbb{R},\mathcal{K},d\lambda)$ form an asymptotic complete scattering system, i.e. the absolutely continuous spectrum of $H$ and $H_{0}$ is the whole real axis. The Hilbert space $\mathcal{K}$ represents the multiplicity of the problem.

A scattering theory for such unitary evolutions, which are equipped with so-called outgoing and incoming subspaces was presented by P.D. Lax and R.S. Phillips (see~\cite{ref:3} and also~\cite[Chap. 12]{ref:6}), in particular for mutual orthogonal out- and incoming subspaces. In this case the scattering matrix is holomorphic in $\Bbb{C}_{+}$, it satisfies the condition $\Vert S(z)\Vert\leq 1,z\in\Bbb{C}_{+}$, and Lax and Phillips solved the resonance-decay problem completely. A decisive concept for their solution is an invariant subspace of a special
decay-semigroup, defined on the so-called Hardy-space $\mathcal{H}_{+}^{2}(\Bbb{R},\mathcal{K},d\lambda)$ for the upper half plane, defined by
\begin{equation}\label{eq:3}
\Bbb{R}\ni t\rightarrow Q_{+}e^{-itM}\mathlarger{\mathlarger{\restriction}}_{\mathcal{H}_{+}^{2}},
\end{equation}
where $Q_{+}$ is the projection from $L^{2}(\Bbb{R},\mathcal{K},d\lambda)$ onto $\mathcal{H}^{2}_{+}$. In the following this decay-semigroup is called the characteristic semigroup. At this point it is appropriate to refer to the property~\eqref{eq:2} of the Breit-Wigner amplitude and the idea mentioned there. Remarkably $e$ is an eigenvector of the decay-semigroup~\eqref{eq:3} with eigenvalue $e^{-i\zeta t}$. I.e. the Breit-Wigner amplitude appears in the Lax-Phillips approach as an important element of a solution of the resonance-decay problem.

The idea, to use the Lax-Phillips technique also for the resonance-decay problem on the positive energy half-axis, i.e. for semi-bounded Hamiltonians, is based on the property that the linear manifold $P_{+}\mathcal{H}^{2}_{+}$ is dense in the Hilbert space $L^{2}(\Bbb{R}_{+},\mathcal{K},d\lambda)$ of the reference Hamiltonian $H_{0}$, where $P_{+}$ is the projection from $L^{2}(\Bbb{R},\mathcal{K},d\lambda)$ on $L^{2}(\Bbb{R}_{+},\mathcal{K},d\lambda)$. The linear manifold $P_{+}\mathcal{H}^{2}_{+}$ equipped with the norm of $\mathcal{H}^{2}_{+}$ {\em is} $\mathcal{H}^{2}_{+}$ itself, since $P_{+}$ is injective on $\mathcal{H}^{2}_{+}$. The proof of the following result for semi-bounded Hamiltonians $H,H_{0}$ applies the Lax-Phillips technique, in particular the properties of the characteristic semigroup (see~\cite[Theorem 3]{ref:2}):

If the scattering matrix of the scattering system $\{H,H_{0}\}$ on $\Bbb{R}_{+}$ is holomorphic continuable into the upper half plane $\Bbb{C}_{+}$ and satisfies a certain boundedness condition then an invariant subspace of the characteristic semigroup is constructed such that its restriction to this subspace is a solution of the resonance-decay problem. That is, the spectrum of the generator of this restriction consists exactly of all poles of the scattering matrix and its resolvent set of all points, where the scattering matrix is holomorphic. The eigenvectors for the poles $\zeta$ are exactly special Breit-Wigner amplitudes $k/(\lambda-\zeta)$ for certain vectors $k$, which satisfy the condition $S(\overline{\zeta})^{\ast}k=0$, thus solving the multiplicity problem.

So far, this result solves the first part of the resonance-decay problem, the spectral characterization of the poles, completely. However, the eigenvectors, i.e. the decaying states, are vectors from $\mathcal{H}^{2}_{+}\subset L^{2}(\Bbb{R},\mathcal{K},d\lambda)$. This means that the direct relationship to the positive energy axis and the initial Hilbert space of states is lost. Even it turns out that these states constructed, i.e. the Breit-Wigner amplitudes, cannot be transferred unitarily to the Hilbert space $L^{2}(\Bbb{R}_{+},\mathcal{K},d\lambda)$ of the reference Hamiltonian $H_{0}=M_{+}$, its multiplication operator, to be decaying states w.r.t. the evolution $e^{-itM_{+}}$ (see Sec.~3).

Nevertheless, in sight of the structural mathematical point of view, the result can be interpreted within this Hilbert space: in the dense linear manifold $P_{+}\mathcal{H}^{2}_{+}$ one can introduce without ambiguity the $\mathcal{H}^{2}_{+}$-norm. Then the decay-semigroup can be considered as acting on $P_{+}\mathcal{H}^{2}_{+}$ equipped with this stronger norm.

The real weakness of this result is that its ansatz is an abstract one. Basically, it takes into consideration only the (canonical) reference Hamiltonian $H_{0}$ for the absolutely continuous spectrum and the scattering operator. The right or justification to use this ansatz goes back to the theorem of Wollenberg (see~\cite{ref:5,ref:6}) together with the fact that the Hamiltonian $H$, i.e. the interaction, is sometimes unknown. However, the reason for the appearance of poles in the scattering matrix of a scattering system ${H,H_{0}}$ remains to be seen.

\section{Friedrichs Model}

In so-called Friedrichs models the reason for the appearance of poles in the corresponding scattering matrix can be recognized. In these models eigenvalues of the reference operator $H_{0}$, which are embedded in its absolutely continuous spectrum are sometimes unstable caused by the interaction of $H$. They can generate poles of the scattering matrix. Therefore, the spectral theoretical characterization of these poles can be alternatively obtained by the method of ``generalized eigenvalues'', for example by so-called Gelfand triples. In this approach the corresponding eigenantilinear forms are usually of the pure Dirac-type (see e.g.~\cite{ref:4,ref:7,ref:9} and~\cite{ref:8}). Interestingly for the Friedrichs model presented in~\cite{ref:7} the solution of the multiplicity problem corresponds exactly to the solution quoted in Sec.~1 (see \cite[Theorem 4.2]{ref:7}). In this paper the Friedrichs model $H:=M+\Gamma +\Gamma^{\ast}$ on the full energy axis is considered on the Hilbert state space $\mathcal{H}:=L^{2}(\Bbb{R},\mathcal{K},d\lambda)\oplus\mathcal{E}$, where $\dim\mathcal{K}<\infty,\dim\mathcal{E}<\infty$ and $M$ the multiplication operator. The operator $\Gamma:\mathcal{E}\rightarrow L^{2}(\Bbb{R},\mathcal{K},d\lambda)$ is defined by $(\Gamma e)(\lambda):=M(\lambda)e$, where $M(\lambda)\in\mathcal{L}(\mathcal{E}\rightarrow\mathcal{K})$. In this model one obtains for the scattering matrix the expression
\begin{equation}\label{eq:4}
S_\mathcal{K}(\lambda)=\EINS_\mathcal{K}-2\pi iM(\lambda)L_{+}(\lambda+i0)^{-1}M(\lambda)^{\ast},
\end{equation}
where $L_{+}(\cdot)$ denotes the so-called Liv\v{s}ic-matrix on $\Bbb{C}_{+}$. Now, if one assumes that $M(\cdot)$ is holomorphic on $\Bbb{R}$ and meromorphic continuable, then for a pole ${\zeta}$ of the scattering matrix in $\Bbb{C}_{-}$ one obtains that the multiplicity of the corresponding Dirac-antilinear form is given by those $k\in\mathcal{K}$, such that $k=M(\zeta)e, e\in\mathcal{E}$, where $L_{+}(\zeta)e=0$ and $L_{+}(\cdot)$ denotes the continuation of the Liv\v{s}ic-matrix into the lower half plane (see~\cite[Theorem 4.2]{ref:7}). The correspondence to the solution of the multiplicity problem quoted in Sec.~1 is expressed by
\begin{lemman}
Let $\zeta\in\Bbb{C}_{-}$ and $k\in\mathcal{K}$. Then $S(\overline{\zeta})^{\ast}k=0$ iff $k=M(\zeta)e$, where $e\in\mathcal{E}$ {\em and} $L_{+}(\zeta)e=0$.
\end{lemman}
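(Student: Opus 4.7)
The plan is to reduce both sides of the equivalence to linear conditions on a single vector by expanding formula~\eqref{eq:4} at $\bar\zeta \in \Bbb{C}_+$ and identifying the two natural avatars of the Liv\v{s}ic matrix at $\zeta \in \Bbb{C}_-$. Writing $M^{\sharp}(z) := M(\bar z)^{\ast}$ for the meromorphic continuation of $\lambda\mapsto M(\lambda)^{\ast}$ off the real axis, formula~\eqref{eq:4} extends to
\[
S(z) = \mathbb{1}_\mathcal{K} - 2\pi i\, M(z)\, L_+(z)^{-1}\, M(\bar z)^{\ast},\quad z \in \Bbb{C}_+,
\]
so that at $z = \bar\zeta$ and after taking the adjoint,
\[
S(\bar\zeta)^{\ast} = \mathbb{1}_\mathcal{K} + 2\pi i\, M(\zeta)\,\bigl(L_+(\bar\zeta)^{\ast}\bigr)^{-1} M(\bar\zeta)^{\ast}.
\]

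Next I would identify two descriptions of the Liv\v{s}ic matrix at $\zeta$. Using the Cauchy-type integral representation $L_+(z) = H_\mathcal{E} - z + \int_{\Bbb{R}}\frac{M(\mu)^{\ast}M(\mu)}{\mu - z}\,d\mu$, a direct Schwarz-reflection calculation gives $L_+(\bar\zeta)^{\ast} = L_-(\zeta)$, where $L_-$ denotes the analogous integral in $\Bbb{C}_-$. On the other hand, with the density extended meromorphically as $g(\zeta) := M(\bar\zeta)^{\ast} M(\zeta)$, the Sokhotski--Plemelj jump across $\Bbb{R}$ yields
\[
L_+(\zeta) = L_-(\zeta) + 2\pi i\, M(\bar\zeta)^{\ast} M(\zeta),\quad \zeta \in \Bbb{C}_-,
\]
where $L_+(\zeta)$ on the left stands for the meromorphic continuation. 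Combining the two relations produces the key identity
\[
L_+(\zeta) = L_+(\bar\zeta)^{\ast} + 2\pi i\, M(\bar\zeta)^{\ast} M(\zeta).
\]

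With this identity the lemma is algebraic. For the forward direction, $S(\bar\zeta)^{\ast} k = 0$ rewrites, via the formula above for $S(\bar\zeta)^{\ast}$, as $k = M(\zeta)e$ with
\[
e := -2\pi i\, \bigl(L_+(\bar\zeta)^{\ast}\bigr)^{-1} M(\bar\zeta)^{\ast} k \in \mathcal{E};
\]
substituting $M(\zeta)e = k$ into the key identity then gives $L_+(\zeta)e = 0$ by cancellation. For the reverse direction, $L_+(\zeta)e = 0$ with $k = M(\zeta)e$ forces $L_+(\bar\zeta)^{\ast} e = -2\pi i\, M(\bar\zeta)^{\ast} k$; since $L_+(\bar\zeta)$ is invertible at $\bar\zeta \in \Bbb{C}_+$ (where $S$ is holomorphic), one solves for $e$ and reads off $S(\bar\zeta)^{\ast} k = k - M(\zeta)e = 0$. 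The main obstacle is the bookkeeping in the middle step: one must fix, once and for all, consistent meromorphic extensions of $M(\lambda)^{\ast}$ and of the density $M(\lambda)^{\ast}M(\lambda)$, so that the Schwarz-reflection identification $L_+(\bar\zeta)^{\ast} = L_-(\zeta)$ and the Sokhotski--Plemelj jump combine into the key identity without sign or conjugation errors; once that is done, the equivalence is a one-line manipulation in each direction.
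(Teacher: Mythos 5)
Your argument is correct and follows essentially the same route as the paper: you define the same candidate vector $e=-2\pi i\,(L_{+}(\overline{\zeta})^{\ast})^{-1}M(\overline{\zeta})^{\ast}k$, use the same two identities $L_{+}(\overline{\zeta})^{\ast}=L_{-}(\zeta)$ and $L_{+}(\zeta)=L_{-}(\zeta)+2\pi i\,M(\overline{\zeta})^{\ast}M(\zeta)$ (merely packaged as one ``key identity''), and run the same reversible algebra in both directions. The only cosmetic difference is a sign convention in the polynomial part of your integral representation of $L_{+}$, which does not affect either the reflection identity or the Plemelj jump.
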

\begin{proof}
(i) Let $S(\overline{\zeta})^{\ast}k=0$. Then, according to equ.~\eqref{eq:4}, one obtains 
\[
k=-2\pi i M(\zeta)(L_{+}(\overline{\zeta})^{-1})^{\ast}M(\overline{\zeta})^{\ast}k
\]
Put
\begin{equation}\label{eq:5}
e:=-2\pi i(L_{+}(\overline{\zeta})^{-1})^{\ast}M(\overline{\zeta})^{\ast}k=
-2\pi i(L_{+}(\overline{\zeta})^{\ast})^{-1}M(\overline{\zeta})^{\ast}k.
\end{equation}
Since $\zeta\in\Bbb{C}_{-}$, the Liv\v{s}ic-matrix at $\overline{\zeta}$ reads
\[
L_{+}(\overline{\zeta})=
(\overline{\zeta}-H_{0})P_\mathcal{E}-\int_{-\infty}^{\infty}\frac{M(\lambda)^{\ast}M(\lambda)}{\overline{\zeta}-\lambda}d\lambda.
\]
Further note
\[ 
L_{+}(\overline{\zeta})^{\ast}=L_{-}(\zeta),
\]
where $L_{-}(\cdot)$ denotes the Liv\v{s}ic-matrix on $\Bbb{C}_{-}$.
According to equ.5 one has
\[
L_{+}(\overline{\zeta})^{\ast}e=-2\pi iM(\overline{\zeta})^{\ast}k=L_{-}(\zeta)e.
\]
For the continuation of $L_{+}(\cdot)$ into the lower half plane one obtains
\[
L_{+}(\zeta)=L_{-}(\zeta)+2\pi iM(\overline{\zeta})^{\ast}M(\zeta).
\]
Then one obtains
\[
L_{+}(\zeta)e=-2\pi iM(\overline{\zeta})^{\ast}(k-M(\zeta)e).
\]
The definition of $e$ in equ.5 implies $k=M(\zeta)e$. Therefore $L_{+}(\zeta)e=0$ follows.

\vspace{1mm}

(ii) Let $k:=M(\zeta)e$, where $L_{+}(\zeta)e=0$. Then
\[
L_{+}(\overline{\zeta})^{\ast}e=L_{-}(\zeta)e=-2\pi i M(\overline{\zeta})^{\ast}M(\zeta)e,
\]
hence
\[
e=-2\pi i(L_{+}(\overline{\zeta})^{\ast})^{-1}M(\overline{\zeta})^{\ast}M(\zeta)e
\]
and
\[
k=M(\zeta)e=-2\pi i M(\zeta)(L_{+}(\overline{\zeta})^{\ast})^{-1}M(\overline{\zeta})^{\ast}k
\]
follows, i.e. $S(\overline{\zeta})^{\ast}k=0.$ 
\end{proof}

Similar results one obtains for Friedrich models on the positive half axis.

The scattering matrices of Friedrichs models may have poles in the upper half plane. Insofar the extension of the result mentioned in Sec.~1 to these cases is obvious. For example, if in the Friedrichs model considered one puts $\dim\mathcal{K}=1$ and
$\Gamma e(\lambda):=\pi^{-1/2}(\lambda+i)^{-1}$, then $\zeta:=i$ is a pole of the scattering matrix.

\section{A No-Go-Theorem}

In Sec.~1 a solution of the resonance-decay problem for a scattering system on the positive half line, proved in~\cite{ref:2}, is quoted and critically considered. It was mentioned that the decaying states constructed there cannot be transferred to the Hilbert space of the reference Hamiltonian. This section contains a proof for this assertion.

\begin{theom}\label{theom:1}
There is no state $\phi\in L^{2}(\Bbb{R}_{+},\mathcal{K},d\lambda),\;\Vert\phi\Vert=1$, such that the Born-probability w.r.t. the unitary time evolution generated by $M_{+}$ is exponentially decaying, i.e. such that
\[
\vert(\phi,e^{-itM_{+}}\phi)\vert^{2}=e^{-2\alpha t},\quad t>0,
\]
for some constant $\alpha>0$.
\end{theom}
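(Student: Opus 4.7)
The strategy is to exploit the nonnegativity of $M_{+}$ to show that the survival amplitude extends to a bounded holomorphic function in a half-plane, and then to derive a contradiction from a classical log-integrability theorem for such functions.

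Concretely, I would set $f(t):=(\phi,e^{-itM_{+}}\phi)$ and write
$$f(t)=\int_{0}^{\infty}e^{-it\lambda}\,\Vert\phi(\lambda)\Vert_{\mathcal{K}}^{2}\,d\lambda.$$
Since $\phi$ has unit norm, $d\mu(\lambda):=\Vert\phi(\lambda)\Vert^{2}_{\mathcal{K}}\,d\lambda$ is a probability measure on $[0,\infty)$. Because the integration runs over $[0,\infty)$, the integrand $e^{-iz\lambda}$ has modulus $e^{\mathrm{Im}(z)\,\lambda}\le 1$ whenever $\mathrm{Im}(z)\le 0$, so $f$ extends to a bounded holomorphic function on the lower half-plane $\Bbb{C}_{-}$ with $\sup_{z\in\Bbb{C}_{-}}|f(z)|\le 1$. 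This extension is continuous on the closed lower half-plane (by dominated convergence) and is not identically zero, since $f(0)=1$.

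I would then invoke the classical half-plane log-integrability theorem: every nonzero bounded analytic function $F$ on $\Bbb{C}_{-}$ with $|F|\le 1$ satisfies
$$\int_{-\infty}^{\infty}\frac{\log|F(t)|}{1+t^{2}}\,dt>-\infty,$$
a consequence of the Poisson--Jensen formula, or equivalently of the outer--inner factorization in $H^{\infty}$. The hypothesis $|f(t)|^{2}=e^{-2\alpha t}$ for $t>0$, combined with the bound $|f(t)|\le 1$ on $\Bbb{R}$ just established, yields $\log|f(t)|\le 0$ on $\Bbb{R}$ and $\log|f(t)|=-\alpha t$ for $t>0$, so
$$\int_{-\infty}^{\infty}\frac{\log|f(t)|}{1+t^{2}}\,dt\le\int_{0}^{\infty}\frac{-\alpha t}{1+t^{2}}\,dt=-\infty,$$
contradicting the preceding display.

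The only non-elementary ingredient is the logarithmic-integrability statement, which is standard in Hardy-space theory and can simply be cited. The rest is a direct Paley--Wiener-type observation that semi-boundedness of the Hamiltonian forces the spectral Fourier transform into a bounded holomorphic function on a half-plane, followed by a one-line contradiction. I do not anticipate a serious obstacle; if a fully self-contained treatment is desired, the only real work is to provide or reference the half-plane log-integral inequality.
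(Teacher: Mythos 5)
Your argument is correct, but it follows a genuinely different route from the paper's. The paper first symmetrizes (Born probabilities depend only on $\vert t\vert$), so the survival amplitude equals $e^{-\alpha\vert t\vert+i\beta(t)}$ on all of $\Bbb{R}$; being square-integrable, it is the Fourier transform of an $L^{2}$-function that must coincide with $g(\lambda)=\Vert\phi(\lambda)\Vert^{2}_{\mathcal{K}}$ extended by zero to $\lambda<0$, hence lies in $\mathcal{H}^{2}_{-}$, and the inverse transform $h(x)=\int e^{ixt}e^{-\alpha\vert t\vert+i\beta(t)}dt$ vanishes for $x<0$; since $h$ extends holomorphically to the strip $\vert\operatorname{Im}z\vert<\alpha$ (this is precisely where the exact two-sided exponential form is used), the identity theorem forces $h\equiv 0$ and thus $g=0$, a contradiction. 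You instead exploit semi-boundedness directly: the survival amplitude belongs to $H^{\infty}$ of $\Bbb{C}_{-}$ with continuous boundary values and $\vert f\vert\leq 1$, and the hypothesis makes $\int_{-\infty}^{\infty}\log\vert f(t)\vert\,(1+t^{2})^{-1}dt=-\infty$, contradicting the log-integrability of boundary moduli of nonzero $H^{\infty}$ functions. This is the classical Khalfin/Paley--Wiener argument; it needs no symmetrization and no $L^{2}$ considerations, and it proves strictly more (any decay profile with divergent logarithmic Poisson integral is excluded, not only exact exponential decay), at the price of importing the half-plane log-integral theorem as a cited black box. The paper's route is more self-contained, resting only on the identity theorem and the elementary Hardy-space relation $P_{+}=F^{-1}Q_{-}F$, but it depends on the exact exponential form to secure analyticity in a strip. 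Both proofs are complete.
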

\begin{proof}
Born probabilities are symmetric w.r.t. future and past, i.e. they depend only on $\vert t\vert$. Assume that there is a state $\phi$ and a constant $\alpha>0$ such that
\[
\vert(\phi,e^{-itM_{+}}\phi)\vert^{2}=e^{-2\alpha\vert t\vert},\quad t\in\Bbb{R}.
\]
Then $\vert(\phi,e^{-itM_{+}}\phi)\vert=e^{-\alpha\vert t\vert}$ and
\[
(\phi,e^{-itM_{+}}\phi)=\int^{\infty}_{0}e^{-it\lambda}\vert\phi(\lambda)\vert^{2}_\mathcal{K}d\lambda=e^{-\alpha\vert t\vert+i\beta(t)},
\]
where $\beta(\cdot)$ is real-valued, continuous and one has $\beta(-t)=-\beta(t)$. Define
\[
g(\lambda)=\left\{
\begin{array}{ll}
\vert\phi(\lambda)\vert^{2}_\mathcal{K},\quad \lambda>0\\
0,\quad \lambda<0\\
\end{array}
\right.
\]
Then $g\in L^{1}(\Bbb{R},d\lambda)$ and
\[
\int^{\infty}_{-\infty}e^{-it\lambda}g(\lambda)d\lambda=e^{-\alpha\vert t\vert+i\beta(t)}.
\]
The function on the right hand side is a $L^{2}$-function, where
\[
\int_{-\infty}^{\infty}\alpha\vert e^{-\alpha\vert t\vert+i\beta(t)}\vert^{2}dt=1,
\]
i.e. there is a function $f\in L^{2}(\Bbb{R},d\lambda)$ such that
\[
F(f)(t)=\hat{f}(t)=\alpha^{1/2}e^{-\alpha\vert t\vert+i\beta(t)},
\]
where $\Vert f\Vert_{L^{2}}=1$ and $F$ denotes the Fourier transform. That is, one obtains
\[
\int_{-\infty}^{\infty}e^{-it\lambda}(2\pi\alpha)^{-1/2}f(\lambda)d\lambda=e^{-\alpha\vert t\vert+i\beta(t)}=
\int_{-\infty}^{\infty}e^{-it\lambda}g(\lambda)d\lambda.
\]
Since the Schwartz space $\mathcal{S}(\Bbb{R})$ is dense in $L^{1}(\Bbb{R})$ w.r.t. the $L^{1}$-norm and dense in $L^{2}(\Bbb{R})$ w.r.t. the $L^{2}$-norm, according to a standard argument in the theory of the Fourier transformation it follows that
$(2\pi\alpha)^{-1/2}f=g$, i.e. one obtains $g\in L^{2}(\Bbb{R})$.

Now $g$ has the property $g=P_{+}g$. This means $\hat{f}$ is an element of $\mathcal{H}^{2}_{-}(\Bbb{R})$, the Hardy space of the lower half plane, i.e. one gets $\hat{f}=Q_{-}\hat{f}$, where $Q_{-}$ denotes the projection onto $\mathcal{H}^{2}_{-}$. Because of
\[
P_{+}=F^{-1}Q_{-}F
\]
it follows that the inverse Fourier transform $F^{-1}\hat{f}$ is necessarily from\linebreak $P_{+}L^{2}(\Bbb{R},dx)$, i.e. the function
\[
x\rightarrow \int_{-\infty}^{\infty}e^{ixt}e^{-\alpha\vert t\vert+i\beta(t)}dt
\]
vanishes for $x<0$. However, the function
\[
h(z):=\int_{-\infty}^{\infty}e^{izt}e^{-\alpha\vert t\vert+i\beta(t)}dt
\]
is well-defined within the stripe $\vert\mbox{Im} z\vert<\alpha$ and a holomorphic function there. Therefore, since this function vanishes for $z=x<0$ it vanishes identically, hence also for $x>0$, i.e. one obtains $g=f=0$, a contradiction.
\end{proof}

\section{A Result for Scattering Systems with Poles of the Scattering Matrix in the Upper Half-plane}

An extension of the result mentioned in Sec.1 in this direction is suggested in Sec.~2. An incomplete version of the following result can be found already in~\cite[Theorem 2]{ref:2}, incomplete because of a flaw in the proof. Surprisingly it turns out that not only the poles in the lower half plane cause decaying states, but also holomorphic points $\zeta$ there may generate such states, but only in the case that
$\overline{\zeta}$ is a pole (in the upper half plane) with a special property of its main part.
\begin{theom}\label{theom:2}
Assume that the scattering matrix of the scattering system\linebreak $\{H,H_{0}\}$ on $\Bbb{R}$ satisfies the following
conditions:
\begin{enumerate}
\item[(I)] It is meromorphic in $\Bbb{C}_{+}$ with at most finitely many poles,
\item[(II)] $\Vert S(z)\Vert<K,\;K>0,\;z\in\Bbb{C}_{+}\;\vert z\vert>R$, where $R$ is sufficiently large,
\item[(III)] there are no complex-conjugated poles,
\item[(IV)] there is at least one pole in $\Bbb{C}_{-}$.
\end{enumerate}
Then the spectrum $\operatorname{spec}B_{+}\subset\Bbb{C}_{-}$ of the generator $B_{+}$ of the restriction of the characteristic semigroup to the subspace $\Bbb{T}_{+}\subset\mathcal{H}^{2}_{+}$ is described as follows:
\begin{enumerate}
\item[(i)] $\zeta\in\Bbb{C}_{-}$ is an eigenvalue of $B_{+}$ iff (a) $\zeta$ is a pole of $S(\cdot)$ {\em or} (b) $\overline{\zeta}$ is a pole of $S(\cdot)$ and the operator coefficient $A$ of the leading term of the main part of the pole $\overline{\zeta}$ is not invertible.

\item[(ii)] $\zeta\in\Bbb{C}_{-}$ is a point of the resolvent set res$\,B_{+}$ of $B_{+}$ iff (a) $S(\zeta)$ and $S(\overline{\zeta})$ exist, i.e. $\zeta,\overline{\zeta}$ are holomorphic points of $S(\cdot)$ or (b) $S(\zeta)$ 
exists and $\overline{\zeta}$ is a pole of $S(\cdot)$ and $A$ is invertible, i.e. $A^{-1}$ exists.
\end{enumerate}
\end{theom}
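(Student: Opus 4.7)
The plan is to embed the meromorphic scattering matrix $S$ into the Lax--Phillips/Sz.-Nagy--Foias framework by first converting $S$ into an operator-valued \emph{inner} function $T$ whose non-invertibility locus encodes precisely the two mechanisms appearing in~(i): the poles of $S$ in $\Bbb{C}_{-}$ (via the unitarity-forced reflection identity) and the non-invertible leading coefficients of poles in $\Bbb{C}_{+}$ (introduced through a compensating Blaschke factor). The invariant subspace $\Bbb{T}_{+}\subset\mathcal{H}^{2}_{+}$ will then be the model subspace $\mathcal{H}^{2}_{+}\ominus T\mathcal{H}^{2}_{+}$, and $B_{+}$ the associated Sz.-Nagy--Foias model operator.

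The first step is to construct $T$. Let $p_{1},\dots,p_{n}\in\Bbb{C}_{+}$ be the poles of $S$ with multiplicities $m_{1},\dots,m_{n}$, finite by~(I), and by~(III) no $\overline{p_{j}}$ is a pole. Then the scalar Blaschke product
\[
B(z):=\prod_{j=1}^{n}\Bigl(\frac{z-p_{j}}{z-\overline{p_{j}}}\Bigr)^{m_{j}}
\]
cancels every pole, so $T(z):=B(z)S(z)$ is holomorphic in $\Bbb{C}_{+}$ with $T(\lambda)$ unitary a.e.\ on $\Bbb{R}$. Condition~(II), combined with the meromorphic structure, bounds $T$ on $\Bbb{C}_{+}$, and a standard maximum-modulus argument on the scalar functions $z\mapsto\langle T(z)v,w\rangle$ upgrades this to $\Vert T(z)\Vert\le 1$, so $T$ is operator-inner. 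Shift-invariance of $T\mathcal{H}^{2}_{+}$ under $e^{it\lambda},\,t\ge 0$, makes $\Bbb{T}_{+}:=\mathcal{H}^{2}_{+}\ominus T\mathcal{H}^{2}_{+}$ invariant under the characteristic semigroup~\eqref{eq:3}, and $B_{+}$ becomes the Sz.-Nagy--Foias model operator with characteristic function $T$; its spectrum in $\Bbb{C}_{-}$ is exactly the set of $\zeta\in\Bbb{C}_{-}$ for which $T(\overline{\zeta})$ fails to be invertible.

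The case analysis is then a pointwise computation of $T(\overline{\zeta})=B(\overline{\zeta})S(\overline{\zeta})$. The reflection identity $S(\overline{z})^{\ast}=S(z)^{-1}$, forced by unitarity on $\Bbb{R}$, shows that $S(\overline{\zeta})$ is non-invertible iff $\zeta$ is a pole of $S$. If $\overline{\zeta}$ is not a pole of $S$ then $B(\overline{\zeta})$ is a nonzero scalar, and the dichotomy $\zeta\in\operatorname{spec}B_{+}$ vs.\ $\zeta\in\operatorname{res}B_{+}$ coincides with (i)(a) vs.\ (ii)(a). If $\overline{\zeta}$ is a pole of $S$ of order $m$ with leading coefficient $A$, the zero of $B$ at $\overline{\zeta}$ has matching order, and a short local expansion yields $T(\overline{\zeta})=cA$ for an explicit nonzero scalar $c$ depending only on $\overline{\zeta}-\zeta$ and the remaining Blaschke factors; hence invertibility of $T(\overline{\zeta})$ is equivalent to invertibility of $A$, producing (i)(b) and (ii)(b). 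Assumption~(IV) serves only to ensure that $\Bbb{T}_{+}$ is nontrivial.

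The main obstacle is the clean identification of $\sigma(B_{+})$ with the non-invertibility locus of $T$ in the matrix-valued setting: while this is classical for scalar inner functions, one must verify in the operator-valued case that the multiplicity of a zero of $T$ and the dimension of $\ker T(\overline{\zeta})$ correctly reproduce the rank deficiency of $A$ for higher-order poles. A related secondary point -- plausibly the flaw in the earlier version of the theorem in~\cite{ref:2} -- is that omitting the Blaschke compensator leaves $T\mathcal{H}^{2}_{+}\not\subset\mathcal{H}^{2}_{+}$ and the whole model-subspace construction collapses, so the novelty here lies in precisely the observation that $B$ brings $\Bbb{C}_{+}$-poles of $S$ back into play, explaining case~(b).
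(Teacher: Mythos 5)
Your construction is in substance the same as the paper's, and your proposal is essentially correct; the difference lies in how the spectral dichotomy is established. Note first that your model space coincides exactly with the paper's: the paper takes $\mathcal{M}_{+}=S\mathcal{N}_{+}$ with $\mathcal{N}_{+}=\frac{p(\cdot)}{(\cdot+i)^{g}}\mathcal{H}^{2}_{+}$, and since $\frac{p(\lambda)}{(\lambda+i)^{g}}=B(\lambda)\,\theta(\lambda)$ with $\theta(\lambda)=\prod_{j}\bigl(\frac{\lambda-\overline{\eta_{j}}}{\lambda+i}\bigr)^{g_{j}}$ an invertible outer factor (it and its inverse are bounded and holomorphic on $\Bbb{C}_{+}$), one has $\frac{p}{(\cdot+i)^{g}}\mathcal{H}^{2}_{+}=B\mathcal{H}^{2}_{+}$ and hence $\mathcal{M}_{+}=T\mathcal{H}^{2}_{+}$; so your $\Bbb{T}_{+}$ is the paper's $\mathcal{T}_{+}$, and your pointwise evaluation $T(\overline{\zeta})=cA$ at a pole $\overline{\zeta}$ is precisely the quantity $(S(\cdot)p(\cdot))(\overline{\zeta})=c(\overline{\zeta})A$ appearing in the paper. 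Where you diverge is that you outsource the identification of $\operatorname{spec}B_{+}$ with the non-invertibility locus of the inner function to the Lax--Phillips/Sz.-Nagy--Foias model theory, whereas the paper proves both halves by hand: the eigenvalue characterization via the reproducing-kernel (residue) identity $\int\frac{1}{\lambda-\overline{\zeta}}(k_{0},v(\lambda))\,d\lambda=2\pi i(k_{0},v(\overline{\zeta}))$ applied to $v=Su$, $u\in\mathcal{N}_{+}$, which yields $S(\overline{\zeta})^{\ast}k_{0}=0$ or $A^{\ast}k_{0}=0$ and in passing exhibits the eigenvectors $k_{0}/(\lambda-\zeta)$ explicitly; and the resolvent-set characterization via an explicit solution of $(B_{+}-\zeta)f=g$, $f(\lambda)=\frac{g(\lambda)-k_{0}}{\lambda-\zeta}$ with $k_{0}$ determined by forcing membership in $\mathcal{H}^{2}_{-}$, plus the closed graph theorem. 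Your route is shorter and conceptually clean, but the citation is doing real work at exactly the point you flag: equating ``$T(\overline{\zeta})$ not invertible'' with ``$\zeta$ is an eigenvalue'' in part (i) requires either $\dim\mathcal{K}<\infty$ or the sharper statement that the point spectrum corresponds to $\ker T(\overline{\zeta})^{\ast}\neq\{0\}$, and the resolvent half of (ii) requires the full Sz.-Nagy--Foias spectrum theorem for operator-valued characteristic functions; the paper's two explicit computations are precisely the unpacking of these facts in the situation at hand, and they additionally deliver the concrete Breit--Wigner eigenvectors and the formula for $k_{0}$ in the resolvent, which the abstract argument does not.
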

\begin{proof}
Let $\eta_{1},\eta_{2},...,\eta_{r}$ be the poles in $\Bbb{C}_{+}$ with the multiplicities $g_{1},g_{2},...,g_{r}$. Put
$g:=\sum_{j=1}^{r}g_{j}$. Let $p(\cdot)$ be the polynomial of degree $g$, defined by $p(\lambda):=\prod_{j=1}^{r}(\lambda-\eta_{j})^{g_{j}}$. Put $\mathcal{M}_{+}:=S\mathcal{N}_{+}$, where $\mathcal{N}_{+}$ is the linear manifold of $\mathcal{H}^{2}_{+}$ of all functions $u$ of the form
$u(\lambda):=\frac{p(\lambda)}{(\lambda+i)^{g}}w(\lambda)$, where $w\in\mathcal{H}^{2}_{+}$. Further put 
$\mathcal{T}_{+}:=\mathcal{H}^{2}_{+}\ominus\mathcal{M}_{+}$.

Proof of (i): Let $\zeta$ be an eigenvalue of $B_{+}$. Then a corresponding eigenvector has necessarily the form $f(\lambda)=
\frac{k_{0}}{\lambda-\zeta}$ for som $k_{0}\in\mathcal{K}$ and one has
\[
\int_{-\infty}^{\infty}(\frac{k_{0}}{\lambda-\zeta},S(\lambda)u(\lambda))_\mathcal{K}d\lambda=
\int_{-\infty}^{\infty}\frac{1}{\lambda-\overline{\zeta}}(k_{0},S(\lambda)u(\lambda))_\mathcal{K}d\lambda=0,\quad u\in\mathcal{N_{+}}.
\]
First let $\overline{\zeta}$ be a holomorphic point of $S(\cdot)$. Then one obtains
\[
\int_{-\infty}^{\infty}\frac{1}{\lambda-\overline{\zeta}}(S(\lambda)^{\ast}k_{0},u(\lambda))_\mathcal{K}d\lambda=
2\pi i(S(\overline{\zeta})^{\ast}k_{0},u(\overline{\zeta}))_\mathcal{K}=0,
\]
hence $S(\overline{\zeta})^{\ast}k_{0}=0$ follows, since every $k\in\mathcal{K}$ is possible for $u(\overline{\zeta})$. This means that
$S(\overline{\zeta})^{\ast}$ is not invertible, i.e. $\zeta$ is a pole and (a) is true. Therefore, one can assume that $\overline{\zeta}$ is a pole. Then one gets
\[
\int_{-\infty}^{\infty}(\frac{k_{0}}{\lambda-\zeta},S(\lambda)\frac{p(\lambda)}{(\lambda+i)^{g}}w(\lambda))_\mathcal{K}d\lambda=
\]
\[
\int_{-\infty}^{\infty}\frac{1}{\lambda-\overline{\zeta}}(k_{0},S(\lambda)p(\lambda)\frac{w(\lambda)}{(\lambda+i)^{g}})_\mathcal{K}d\lambda =0=
\]
\[
2\pi i(k_{0},(S(\cdot)p(\cdot))(\overline{\zeta})\frac{w(\overline{\zeta})}{(\overline{\zeta}+i)^{g}})_\mathcal{K}=
2\pi i(k_{0},c(\overline{\zeta})A\frac{w(\overline{\zeta})}{(\overline{\zeta}+i)^{g}})_\mathcal{K},
\]
where $(S(\cdot)p(\cdot))(\overline{\zeta})=c(\overline{\zeta})A,\; c(\overline{\zeta})\neq 0$ and $A\neq 0$, where $A$ is the leading term of the main part of the pole $\overline{\zeta}$, i.e. one obtains $(A^{\ast}k_{0},k)=0$ for all $k\in\mathcal{K}$, hence $A^{\ast}k_{0}=0$, i.e. $A$ is not invertible and (b) is true.

For the reversal let $\zeta$ be a pole of $S(\cdot)$ and $S(\overline{\zeta})^{\ast}k_{0}=0$ or let $\overline{\zeta}$ be a pole and
$A^{\ast}k_{0}=0$ for some $k_{0}\in\mathcal{K}, k_{0}\neq 0$. Then all calculations are reversible.                 

Proof of (ii): Let $\zeta\in\mbox{res}\,B_{+}$. If $\zeta$ is a pole then $\zeta$ is an eigenvalue, hence it cannot be a member of
$\mbox{res}\,B_{+}$. If $S(\zeta)$ exists and $\overline{\zeta}$ is a pole, but $A$ is not invertible then $\zeta$ is again an eigenvalue.
Reversal:\\[3pt]
(a): In this case $S(\zeta)$ and $S(\overline{\zeta})$ exist. Then $\zeta$ is not an eigenvalue, hence 
\[
(B_{+}-\zeta\EINS)^{-1}
\]
exists. According to the ``closed graph theorem'' it is sufficient to show that
\[
\mbox{ima}\,(B_{+}-\zeta\EINS)=\mathcal{T}_{+}
\]
is true, i.e. if $g\in\mathcal{T}_{+}$ then one has to construct a function $f\in\mbox{dom}\,B_{+}$ such that
\[
(B_{+}-\zeta\EINS)f=g.
\]
In any case $f$ is an element from the domain of the generator of the full characteristic semigroup. Therefore it is sufficient to construct $f$ as an element of $\mathcal{T}_{+}=\mathcal{H}^{2}_{+}\ominus\mathcal{M}_{+}$ such that
\[
f(\lambda)=\frac{g(\lambda)-k_{0}}{\lambda-\zeta},
\]
where $k_{0}\in\mathcal{K}$ is a suitable vector. Since $g\bot\mathcal{M}_{+}$, i.e. $g\bot S\mathcal{N}_{+}$ or $S^{\ast}g\bot\mathcal{N}_{+}$ one has
\[
S^{\ast}\frac{\overline{p(\cdot)}g(\cdot)}{(\cdot-i)^{g}}\bot\mathcal{H}^{2}_{+},
\]
i.e. the function
\[
\lambda\rightarrow h(\lambda):= S(\lambda)^{\ast}\frac{\overline{p(\lambda)}}{(\lambda-i)^{g}}g(\lambda)
\]
is an element of $\mathcal{H}^{2}_{-}$. The corresponding expression for $f$ reads
\begin{equation}\label{eq:6}
\frac{S(\lambda)^{\ast}\overline{p(\lambda)}}{(\lambda-i)^{g}}f(\lambda)=(\lambda-\zeta)^{-1}(
h(\lambda)-\frac{S(\lambda)^{\ast}\overline{p(\lambda)}}{(\lambda-i)^{g}}k_{0}).
\end{equation}
Since $\zeta\in\Bbb{C}_{-}$, on has $p(\zeta)\neq 0$. In order that the right hand side of equ.~\eqref{eq:6} is from $\mathcal{H}^{2}_{-}$, one has to put
\[
h(\zeta)=\frac{S(\zeta)^{\ast}\overline{p(\zeta)}}{(\zeta-i)^{g}}k_{0}=
\frac{S(\overline{\zeta})^{-1}\overline{p(\zeta)}}{(\zeta-i)^{g}}k_{0}.
\]
Hence $k_{0}$ is uniquely determined by
\[
k_{0}=\frac{(\zeta-i)^{g}}{\overline{p(\zeta)}}S(\zeta)h(\zeta).
\]
(b): In this case $S(\zeta)$ exists but $\overline{\zeta}$ is a pole of $S(\cdot)$ and $A$ is invertible. The function $h$ is defined as before. Now the equation~\eqref{eq:6} is written in the form
\[
\frac{(S(\cdot)p(\cdot))^{\ast}(\lambda)}{(\lambda-i)^{g}}f(\lambda)=
(\lambda-\zeta)^{-1}(h(\lambda)-\frac{(S(\cdot)p(\cdot))^{\ast}(\lambda)}{(\lambda-i)^{g}}k_{0}).
\]
In this case one has $(S(\cdot)p(\cdot))(\overline{\zeta})=cA$, where $c\neq 0$ because $\overline{\zeta}$ is one of the poles $\eta_{j}$ and $A$ is again the leading term of the main part of the pole $\overline{\zeta}$ i.e.
\[
(S(\cdot)p(\cdot))^{\ast}(\zeta)=\overline{c}A^{\ast}
\]
and one obtains $k_{0}$ uniquely from the equation
\[
h(\zeta)=\frac{\overline{c}}{(\zeta-i)^{g}}A^{\ast}k_{0}
\]
i.e.
\[
k_{0}=(\overline{c})^{-1}(\zeta-i)^{g}(A^{\ast})^{-1}h(\zeta).
\]
\end{proof}


\begin{thebibliography}{8}
\bibitem{ref:1}
Baumg\"artel, H.,
\textit{The Resonance-Decay Problem in Quantum Mechanics}, in:
Geometric Methods in Physics, XXX Workshop 2011, Trends in Mathematics, 165-174, Springer Basel 2013.

\bibitem{ref:2}
Baumg\"artel, H.,
\textit{Resonances of quantum mechanical scattering systems and Lax-Phillips scattering theory},
J. Math. Phys. \textbf{51}, 113508 (2010).

\bibitem{ref:3}
Lax, P.D. and Phillips, R.S.,
\textit{Scattering Theory}
Academic Press, New York 1967.

\bibitem{ref:4}
Bohm, A. and Gadella, M.:
\textit{Dirac Kets, Gamov Vectors and Gelfand Triplets},
Lecture Notes in Physics, Vol. \textbf{348}, Springer Berlin 1989.

\bibitem{ref:5}
Wollenberg, M.,
\textit{On the inverse problem in the abstract theory of scattering},
ZIMM-Preprint Akad. Wiss. DDR, Berlin 1977.

\bibitem{ref:6}
Baumg\"artel, H. and Wollenberg, M.,
\textit{Mathematical Scattering Theory}.
Birkh\"auser Basel Boston Stuttgart 1983.

\bibitem{ref:7}
Baumg\"artel, H.,
\textit{Generalized Eigenvectors for Resonances in the Friedrichs Model and Their Associated Gamov Vectors},
Rev. Math. Phys. \textbf{18}, 61-78 (2006).

\bibitem{ref:8}
Baumg\"artel, H, Kaldass, H. and Komy, S.,
\textit{On spectral properties of the resonances for selected potential scattering systems},
J. Math. Phys. \textbf{50}, 023511 (2009).

\bibitem{ref:9}
Baumg\"artel, H.,
\textit{Spectral and Scattering Theory of Friedrichs Models on the Positive Half Line with Hilbert-Schmidt Perturbations},
Annales Henri Poincare´(AHP)10, Nr.1, 123-143 (2009).
\end{thebibliography}
\end{document}